\newtheorem{prop}{Proposition}
  \providecommand\BibTeX{{%
    \normalfont B\kern-0.5em{\scshape i\kern-0.25em b}\kern-0.8em\TeX}}}
\begin{document}

\title{DDTCDR: Deep Dual Transfer Cross Domain Recommendation}

\author{Pan Li}
\affiliation{%
  \institution{New York University}
  \city{44th West 4th Street}
  \state{NY}
  \country{US}}
\email{pli2@stern.nyu.edu}
  
\author{Alexander Tuzhilin}
\affiliation{%
  \institution{New York University}
  \city{44th West 4th Street}
  \state{NY}
  \country{US}}
\email{atuzhili@stern.nyu.edu}

\begin{abstract}
Cross domain recommender systems have been increasingly valuable for helping consumers identify the most satisfying items from different categories. However, previously proposed cross-domain models did not take into account bidirectional latent relations between users and items. In addition, they do not explicitly model information of user and item features, while utilizing only user ratings information for recommendations. To address these concerns, in this paper we propose a novel approach to cross-domain recommendations based on the mechanism of dual learning that transfers information between two related domains in an iterative manner until the learning process stabilizes. We develop a novel latent orthogonal mapping to extract user preferences over multiple domains while preserving relations between users across different latent spaces. Combining with autoencoder approach to extract the latent essence of feature information, we propose Deep Dual Transfer Cross Domain Recommendation (DDTCDR) model to provide recommendations in respective domains. We test the proposed method on a large dataset containing three domains of movies, book and music items and demonstrate that it consistently and significantly outperforms several state-of-the-art baselines and also classical transfer learning approaches.
\end{abstract}

\keywords{Cross Domain Recommendation, Deep Learning, Transfer Learning, Dual Learning, Autoencoder}

\maketitle

\section{Introduction}
Recommender systems have become the key component for online marketplaces to achieve great success in understanding user preferences. Collaborative filtering (CF) \cite{sarwar2001item} approaches, especially matrix factorization (MF) \cite{koren2009matrix} methods constitute the corner stone to help user discover satisfying products. However, these models suffer the cold start and the data sparsity problems \cite{schein2002methods,adomavicius2005toward}, for we only have access to a small fraction of past transactions, which makes it hard to model user preferences accurately and efficiently.

To address these problems, researchers propose to use cross domain recommendation \cite{fernandez2012cross} through transfer learning \cite{pan2010survey} approaches that learn user preferences in the source domain and transfer them to the target domain. For instance, if a user watches a certain movie, we will recommend the original novel on which the movie is based to that user. Most of the methods along this line focus on the unidirectional transfer learning from the source domain to the target domain and achieve great recommendation performance \cite{fernandez2012cross}. In addition, it is beneficial for us to also transfer user preferences in the other direction via dual transfer learning \cite{long2012dual,zhong2009cross,wang2011cross}. For example, once we know the type of books that the user would like to read, we can recommend movies on related topics to form a loop for better recommendations in both domains \cite{li2009can}. 

However, previous dual transfer models only focus on explicit information of users and items, without considering latent and complex relations between them. Besides, they do not include user and item features during the recommendation process, thus limiting the recommender system to provide satisfying results. Latent embedding methods, on the other hand, constitute a powerful tool to extract latent user preferences from the data record and model user and item features efficiently \cite{zhang2019deep}. Therefore, it is crucially important to design a model that utilize latent embedding approach to conduct dual transfer learning for cross domain recommendations.

In this paper, we apply the idea of deep dual transfer learning to cross-domain recommendations using latent embeddings of user and item features. We assume that if two users have similar preferences in a certain domain, their preferences should also be similar across other domains as well. We address this assumption by proposing a unifying mechanism that extracts the essence of preference information in each domain and improves the recommendations for both domains simultaneously through better understanding of user preferences. 

Specifically, we propose Deep Dual Transfer Cross Domain Recommendation (DDTCDR) model that learns latent orthogonal mappings across domains and provides cross domain recommendations leveraging user preferences from all domains. Furthermore, we empirically demonstrate that by iteratively updating the dual recommendation model, we simultaneously improve recommendation performance over both domains and surpass all baseline models. Compared to previous approaches, the proposed model has the following advantages:
\begin{itemize}
\item It transfers latent representations of features and user preferences instead of explicit information between the source domain and the target domain to capture latent and complex interactions as well as modeling feature information.
\item It utilizes deep dual transfer learning mechanism to enable bidirectional transfer of user preferences that improves recommendation performance in both domains simultaneously over time.
\item It learns the latent \textit{orthogonal} mapping function across the two domains, that is capable of preserving similarities of user preferences and computing the inverse mapping function efficiently.
\end{itemize}

In this paper, we make the following contributions:
\begin{itemize}
\item We propose to apply the combination of dual transfer learning mechanism and latent embedding approach to the problem of cross domain recommendations.
\item We empirically demonstrate that the proposed model outperforms the state-of-the-art approaches and improves recommendation accuracy across multiple domains and experimental settings.
\item We theoretically demonstrate the convergence condition for the simplified case of our model and empirically show that the proposed model stabilizes and converges after several iterations.
\item We illustrate that the proposed model can be easily extended for multiple-domain recommendation applications.
\end{itemize}


\section{Related Work}
The proposed model stems from two research directions: cross domain recommendations and deep-learning based recommendations. Also, we discuss the literature on dual transfer learning, and how they motivate the combination of dual transfer learning mechanism and latent embedding methods for cross domain recommendations.

\subsection{Cross Domain and Transfer Learning-based Recommendations}
Cross domain recommendation approach \cite{fernandez2012cross} constitutes a powerful tool to deal with the data sparsity problem. Typical cross domain recommendation models are extended from single-domain recommendation models, including CMF \cite{singh2008relational}, CDCF \cite{li2009can,hu2013personalized}, CDFM \cite{loni2014cross}, Canonical Correlation Analysis \cite{sahebi2017cross,sahebi2015takes,sahebi2014content,sahebi2013cross} and Dual Regularization \cite{wu2018dual}.  These approaches assume that different patterns characterize the way that users interact with items of a certain domain and allow interaction information from an auxiliary domain to inform recommendation in a target domain. 

The idea of information fusion also motivates the use of transfer learning \cite{pan2010survey} that transfers extracted information from the source domain to the target domain.  Specifically, researchers \cite{hu2018conet,lian2017cccfnet} propose to learn the user preference in the source domain and transfer the preference information into target domain for better understanding of user preferences. These models have achieved great success in addressing the cold-start problem and enhancing recommendation performance. 

However, these models do not fundamentally address the relationship between different domains, for they do not improve recommendation performance of both domains simultaneously, thus might not release the full potential of utilizing the cross-domain user interaction information. Also they do not explicitly model user and item features during recommendation process. In this paper, we propose to use a novel dual transfer learning mechanism combining with autoencoder to overcome these issues and significantly improve recommendation performance.

\subsection{Dual Transfer Learning}
Transfer learning \cite{pan2010survey} deals with the situation where the data obtained from different resources are distributed differently. It assumes the existence of common knowledge structure that defines the domain relatedness, and incorporate this structure in the learning process by discovering a shared latent feature space in which the data distributions across domains are close to each other. The existing transfer learning methods for cross-domain recommendation include Collaborative DualPLSA \cite{zhuang2010collaborative}, Joint Subspace Nonnegative Matrix Factorization \cite{liu2013multi}  and JDA \cite{long2013transfer} that learn the latent factors and associations spanning a shared subspace where the marginal distributions across domains are close.

In addition, to exploit the duality between these two distributions and to enhance the transfer capability, researchers propose the dual transfer learning mechanism \cite{long2012dual,zhong2009cross,wang2011cross} that simultaneously learns the marginal and conditional distributions. Recently, researchers manage to achieve great performance on machine translation with dual-learning mechanism \cite{he2016dual,xia2017dual,wang2018dual}. All these successful applications address the importance of exploiting the duality for mutual reinforcement. However, none of them apply the dual transfer learning mechanism into cross-domain recommendation problems, where the duality lies in the symmetrical correlation between source domain and target domain user preferences. In this paper, we utilize a novel dual-learning mechanism and significantly improve recommendation performance.

\subsection{Deep Learning-based Recommendations}
Recently, deep learning has been revolutionizing the recommendation architectures dramatically and brings more opportunities to improve the performance of existing recommender systems. To capture the latent relationship between users and items, researchers propose to use deep learning based recommender systems \cite{zhang2019deep,wang2015collaborative}, especially embedding methods \cite{he2017neural} and autoencoding methods \cite{sedhain2015autorec,wu2016collaborative,li2017collaborative,liang2018variational} to extract the latent essence of user-item interactions for the better understanding of user preferences.

However, user preferences in different domains are learned separately without exploiting the duality for mutual reinforcement, for researchers have not addressed the combination of deep learning methods with dual transfer learning mechanism in recommender systems, which learns user preferences from different domains simultaneously and further improves the recommendation performance. To this end, we propose a dual transfer collaborative filtering model that captures latent interactions across different domains. The effectiveness of dual transfer learning over the existing methods is demonstrated by extensive experimental evaluation.

\begin{figure*}[!]
\centering
\includegraphics[width=\textwidth]{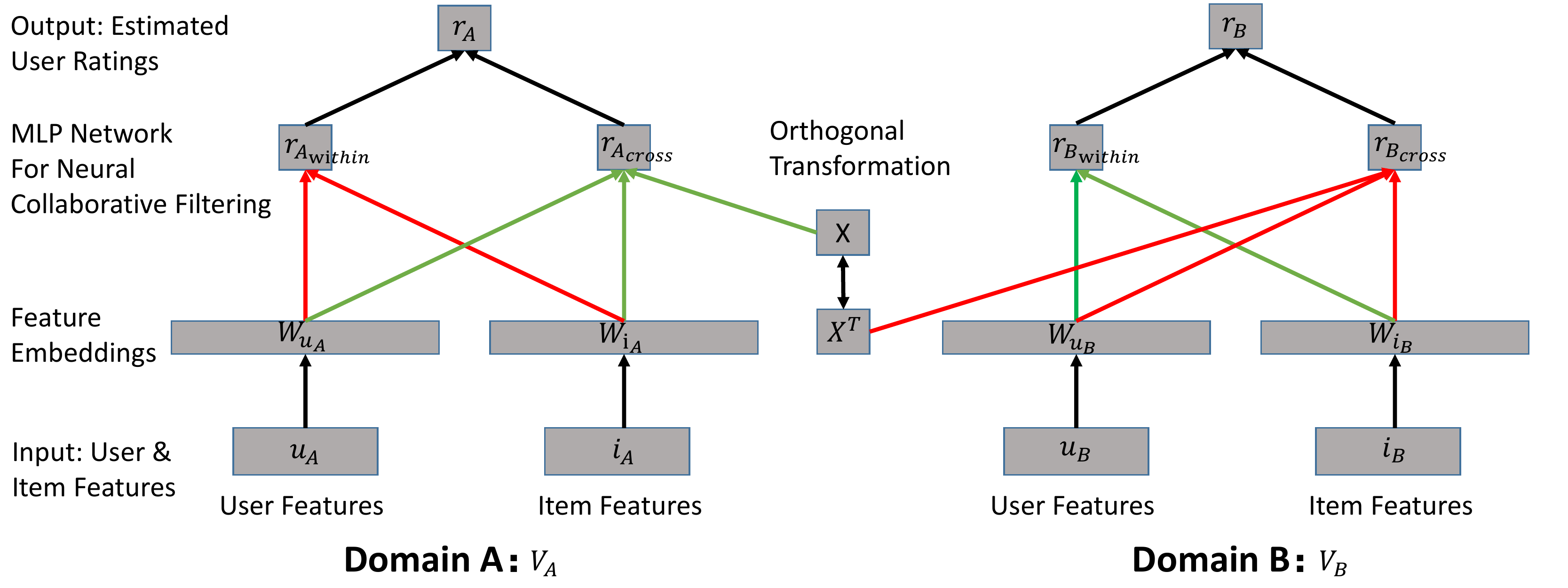}
\caption{Model Framework: Red and blue lines represent the recommendation model for domain A and B respectively. We obtain the estimated ratings by taking the linear combination of within-domain and cross-domain user preferences and backpropogate the loss to update the two models and orthogonal mappings simultaneously.}
\label{framework}
\end{figure*}

\begin{algorithm*}[!]
\caption{Dual Neural Collaborative Filtering}\label{alg}
\begin{algorithmic}[1]
\State \textbf{Input}: Domain $V_A$ and $V_B$, autoencoder $AE_{A}$ and $AE_{B}$, transfer rate $\alpha$, learning rates $\gamma_{A}$ and $\gamma_{B}$,  initial recommendation models $RS_{A}$ and $RS_{B}$, initial mapping function X
\Repeat
\State Sample user-item records $d_A$ and $d_B$ from $V_A$ and $ V_B$ respectively
\State Unpack records $d_A, d_B$ as user features $u_A, u_B$, item features $i_A, i_B$ and ratings $r_A, r_B$
\State Generate feature embeddings from autoencoder as $W_{u_{A}} = AE_{A}(u_{A})$, $W_{u_{B}} = AE_{B}(u_{B})$, $W_{i_{A}} = AE_{A}(i_{A})$, $W_{i_{B}} = AE_{B}(i_{B})$
\State Estimate the ratings in domain A via $r_A'= (1-\alpha)RS_{A}(W_{u_{A}},W_{i_{A}}) + \alpha RS_{B}(X*W_{u_{A}},W_{i_{A}})$
\State Estimate the ratings in domain B via $r_B'= (1-\alpha)RS_{B}(W_{u_{B}},W_{i_{B}}) + \alpha RS_{A}(X^T*W_{u_{B}},W_{i_{B}})$
\State Compute MSE loss $\hat{r_{A}} = r_A - r_A'$, $\hat{r_{B}} = r_B - r_B'$
\State Backpropogate $\hat{r_{A}}$, $\hat{r_{B}}$ and update $RS_{A}$, $RS_{B}$;
\State Backpropogate orthogonal constraint on $X$;  Orthogonalize $X$
\Until{convergence}
\end{algorithmic}
\label{dualncf}
\end{algorithm*}

\section{Method}
In this section, we present the proposed Deep Dual Transfer Cross Domain Recommendation (DDTCDR) model. First, we construct feature embeddings using the autoencoding technique from the user and item features. Then we design a specific type of latent orthogonal mapping function for transferring feature embeddings across two different domains through dual transfer learning mechanism. Moreover, we theoretically demonstrate convergence of the proposed model under certain assumptions. The complete architecture of DDTCDR system is illustrated in Figure \ref{framework}. As shown in Figure \ref{framework}, we take the user and item features as input and map them into the feature embeddings, from which we compute the within-domain and cross-domain user preferences and provide recommendations. Important mathematical notations used in our model are listed in Table \ref{notation}. We explain the details in the following sections.

\begin{table}[b!]
\centering
\caption{Mathematical Notations.}\label{notation}
\vspace{-2mm}
\begin{tabular}{cc}
\hline
\textbf{Symbol} & \textbf{Description} \\ \hline
$u$ & User Features \\
$i$ & Item Features \\
$W_{u}$ & User Feature Embeddings \\
$W_{i}$ & Item Feature Embeddings \\
$\gamma$ & Learning Rate \\
$r$ & Estimated Ratings \\
$r_{within}$ & Within-Domain Estimated Ratings\\
$r_{cross}$ & Cross-Domain Estimated Ratings \\
$X$ & Latent Orthogonal Mapping\\
$X^T$ & Transpose of Latent Orthogonal Mapping \\
$RS$ & Domain-Specific Recommender System \\
$AE$ & Domain-Specific Autoencoder \\
$\alpha$ & Hyperparameter in Hybrid Utility Function \\
\hline
\end{tabular}
\vspace{-0mm}
\end{table}

\subsection{Feature Embeddings}
To effectively extract latent user preferences and efficiently model features of users and items, we present an autoencoder framework that learns the latent representations of user and item features and transforms the heterogeneous and discrete feature vectors into continuous feature embeddings. We denote the feature information for user $a$ as $u_{a} = \{u_{a_{1}},eu_{a_{2}},\cdots,u_{a_{m}}\}$ and the feature information for item $b$ as $i_{b} = \{i_{b_{1}},i_{b_{2}},\cdots,i_{b_{n}}\}$, where $m$ and $n$ stand for the dimensionality of user and item feature vectors respectively. The goal is to train two separate neural networks: encoder that maps feature vectors into latent embeddings, and decoder that reconstructs feature vectors from latent embeddings. Due to effectiveness and efficiency of the training process, we formulate both the encoder and the decoder as multi-layer perceptron (MLP). MLP learns the hidden representations by optimization reconstruction loss $L$:
\begin{equation}
L = ||u_{a}-MLP_{dec}(MLP_{enc}(u_{a}))||
\end{equation}
where $MLP_{enc}$ and $MLP_{dec}$ represents the MLP network for encoder and decoder respectively. Note that, in this step we train the autoencoder separately for users and items in different domains to avoid the information leak between domains.

\subsection{Latent Orthogonal Mapping}
In this section we introduce the latent orthogonal mapping function for transferring user preferences from the source domain to the target domain. The fundamental assumption of our proposed approach is that, if two users have similar preferences in the source domain, their preferences should also be similar in the target domain. This implies that we need to preserve similarities between users during transfer learning across domains. In particular, we propose to use a latent orthogonal matrix to transfer information across domains for two reasons. First, it preserves similarities between user embeddings across different latent spaces since orthogonal transformation preserves inner product of vectors. Second, it automatically derives the inverse mapping matrix as $X^T$, because $Y = X^T(XY)$ holds for any given orthogonal mapping matrix $X$, thus making inverse mapping matrix equivalent to its transpose. This simplifies the learning procedure and reduces the perplexity of the recommendation model.

Using the latent orthogonal mapping, we could transfer user preferences from one domain to the other. In the next section, we will introduce model to utilize this latent orthogonal mapping for constructing dual learning mechanism for cross domain recommendations.

\subsection{Deep Dual Transfer Learning}
As an important tool to provide recommendations, matrix factorization methods associate user-item pairs with a shared latent space, and use the latent feature vectors to represent users and items. In order to model latent interactions between users and items as well as feature information, it is natural to generalize matrix factorization methods using latent embedding approaches. However, in the cross domain recommendation application, we may not achieve the optimal performance by uniformly applying neural network model to all domains of users and items due to the cold-start and sparsity problems.

To address these problems and improve performance of recommendations, we propose to combine dual transfer learning mechanism with cross domain recommendations, where we transfer user preferences between the source domain and the target domain simultaneously. Consider two different domains $D_{A}$ and $D_{B}$ that contain user-item interactions as well as user and item features. In real business practices, user group in $D_{A}$ often overlaps with $D_{B}$, meaning that they have purchased items in both domains, while there is no overlap of item between two domains for each item only belongs to one single domain. One crucial assumption we make here is that, if two users have similar preferences in $D_{A}$, they are supposed to have similar preferences in $D_{B}$ as well, indicating that we can learn and improve user preferences in both domains simultaneously over time. Therefore to obtain better understanding of user preferences in $D_{A}$, we also utilize external user preference information from $D_{B}$ and combine them together. Similarly, we could get better recommendation performance in $D_{B}$ if we utilize user preference information from $D_{A}$ at the same step. To leverage duality of the two transfer learning based recommendation models and to improve effectiveness of both tasks simultaneously, we conduct dual transfer learning recommendations for the two models \textit{together} and learn the latent mapping function accordingly.  

Specifically, we propose to model user preferences $r$ using two components: \textit{within-domain preference} $r_{within}$ that captures user interactions and predict user behaviors in the target domain and \textit{cross-domain preference} $r_{cross}$ that utilizes user actions from the source domain. We also introduce transfer rate $\alpha$ as a hyperparameter, which represents the relative importance of the two components in prediction of user preferences. We propose to estimate user ratings in domain pairs $(A,B)$ as follows:
\begin{equation}
r_A'= (1-\alpha)RS_{A}(W_{u_{A}},W_{i_{A}}) + \alpha RS_{B}(X*W_{u_{A}},W_{i_{A}})
\end{equation}
\begin{equation}
r_B'= (1-\alpha)RS_{B}(W_{u_{B}},W_{i_{B}}) + \alpha RS_{A}(X^T*W_{u_{B}},W_{i_{B}})
\end{equation}
where $W_{u_{A}},W_{i_{A}},W_{u_{B}},W_{i_{B}}$ represents user and item embeddings and $RS_{A}, RS_{B}$ stand for the neural recommendation model for domain A and B respectively. The first term in each equation computes within-domain user preferences from user and item features in the same domain, while the second term denotes cross-domain user preferences obtained by the latent orthogonal mapping function $X$ to capture heterogeneity of different domains. We use weighted linear combination for rating estimations in equations (4) and (5). When $\alpha = 0$ or there is no user overlap between two domains, the dual learning model degenerates to two separate single-domain recommendation models; when $\alpha = 0.5$ and $W = \textbf{I}$, the dual learning model degenerates to one universal recommendation model across two domains. For users that only appear in one domain, the hyperparameter $\alpha$ is selected as 0. For users that appear in both domains, $\alpha$ normally should take a positive value between 0 and 0.2 based on the experimental results conducted in Section 6, which indicates that within-domain preferences play the major role in the understanding of user behavior.

As shown in Figure \ref{framework} and equations (4) and (5), dual transfer learning entails the transfer loop across two domains and the learning process goes through the loop iteratively. It is important to study the convergence property of the model, which we discuss in the next section.

\subsection{Convergence Analysis}
In this section, we present the convergence theorem of matrix factorization with dual transfer learning mechanism. We denote the rating matrices as $V_{A},V_{B}$ for the two domains A and B respectively. The goal is to find the approximate non-negative factorization matrices that simultaneously minimizes the reconstruction loss. We conduct multiplicative update algorithms \cite{lin2007convergence} for the two reconstruction loss iteratively given $\alpha$, $X$, $V_{A}$ and $V_{B}$.
\begin{equation}
\begin{split}
\displaystyle \min_{W_{A},W_{B},H_{A},H_{B}} & |V_{A} - (1-\alpha) W_{A}H_{A} - \alpha XW_{B}H_{B}| \\
  + & |V_{B} - (1-\alpha) W_{B}H_{B} - \alpha X^TW_{A}H_{A}|
\end{split}
\end{equation}
\begin{prop}
Convergence of the iterative optimization of dual matrix factorization for rating matrices $V_{A}$ and $V_{B}$ in (6) is guaranteed.
\end{prop}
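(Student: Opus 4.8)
The plan is to establish convergence through the classical auxiliary-function (majorization--minimization) argument behind multiplicative updates \cite{lin2007convergence}, adapted to the coupled dual structure of (6). Reading $|\cdot|$ as the (squared) Frobenius norm, I would first note that the objective $F(W_A,W_B,H_A,H_B)$ in (6) is a sum of two squared residuals and is therefore bounded below by $0$. The iterative scheme is block-coordinate: at each step one of the four factor matrices is refreshed by a Lee--Seung-style multiplicative rule while the other three are frozen. It then suffices to show that every single block update leaves $F$ non-increasing, since monotone decrease together with the lower bound $0$ forces the value sequence $\{F^{(t)}\}$ to converge.

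The key reduction is that, with three blocks held fixed, each sub-problem collapses to a standard nonnegative least-squares residual. When updating $W_A$, for example, the first residual becomes $|\tilde V_A - (1-\alpha)W_A H_A|$ with the constant offset $\tilde V_A = V_A - \alpha X W_B H_B$, and the second becomes $|\tilde V_B - \alpha X^{\top} W_A H_A|$ with $\tilde V_B = V_B - (1-\alpha) W_B H_B$. Expanding the quadratic in $W_A$ produced by the second residual and invoking orthogonality of the mapping, namely $X X^{\top} = \mathbf{I}$, the curvature simplifies to $\alpha^{2}\, H_A^{\top} W_A^{\top} W_A H_A$ and no longer involves $X$; this is exactly where the orthogonal structure is used, and the cross term merely folds $X\tilde V_B$ into the effective target. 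Summing the two quadratics keeps the $W_A$-sub-objective a single nonnegative quadratic, for which I would build the diagonal auxiliary function $G(W_A, W_A^{(t)})$ in the usual way, verify $G(W,W)=F(W)$ and $G(W,W^{(t)})\ge F(W)$ by convexity, and check that its unique minimizer reproduces the multiplicative update. The chain $F(W_A^{(t+1)}) \le G(W_A^{(t+1)}, W_A^{(t)}) \le G(W_A^{(t)}, W_A^{(t)}) = F(W_A^{(t)})$ then yields monotonicity, and the updates for $H_A$, $W_B$, $H_B$ follow by the symmetric computation.

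The main obstacle I anticipate is that the folded targets $\tilde V_A$ and $\tilde V_B$ need not be entrywise nonnegative, since they are formed by subtracting a cross-domain reconstruction, and the naive Lee--Seung auxiliary function presupposes nonnegative data and can fail to majorize under mixed signs. To handle this I would split each coupling matrix into its positive and negative parts, $M = M^{+} - M^{-}$, construct the majorizer from $M^{+}$, and treat $M^{-}$ as a concave contribution linearized at the current iterate $W^{(t)}$; this is precisely the device in the mixed-sign convergence analysis \cite{lin2007convergence}, and it preserves both the majorization inequality and nonnegativity of the iterates. A secondary point to close is that monotone convergence of the \emph{objective values} should be distinguished from convergence of the iterates themselves; I would either state the conclusion at the level of the value sequence (which is what the proposition asserts) or append the standard boundedness-plus-limit-point argument to upgrade it.
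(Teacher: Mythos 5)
Your proposal is essentially correct but takes a genuinely different route from the paper's. The paper does not run a block-coordinate majorization argument on the coupled objective; instead it \emph{decouples} the system algebraically: treating the two residuals as exactly zero, it solves the resulting linear system to obtain $(1-\alpha)V_A - \alpha XV_B = (1-2\alpha)W_AH_A$ and $(1-\alpha)V_B - \alpha X^TV_A = (1-2\alpha)W_BH_B$, so the coupled problem collapses into two \emph{independent} single-domain factorizations of transformed rating matrices, and the classical Lee--Seung convergence result is invoked once (a) $2\alpha-1<0$ and (b),(c) the two transformed matrices are entrywise nonnegative; the paper then forces (b) and (c) by a uniform positive shift $V+mk\mathbf{1}$ of the ratings. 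Your argument instead keeps the coupling, uses $XX^{T}=\mathbf{I}$ to fold the cross-domain residual into a single effective quadratic target for each block, and handles the resulting mixed-sign target by the positive/negative splitting device rather than a global perturbation. The trade-off is this: the paper's elimination yields interpretable sufficient conditions ($\alpha<1/2$ plus dominance of the within-domain signal over the transferred signal), but the elimination step is only exact when the reconstruction error vanishes, so as a convergence proof for the actual least-squares iteration it is heuristic; your version operates directly on the objective being minimized, requires no nonnegativity conditions on transformed data and no perturbation, holds for any $\alpha\in[0,1]$, and correctly separates monotone convergence of objective values from convergence of the iterates---a distinction the paper glosses over. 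What you give up is the paper's clean, interpretable characterization of when the cross-domain term behaves as a mere regularizer.
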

\begin{proof}
Combining the two parts of objective functions and eliminating $W_{A}H_{A}$ term, we will get
\begin{equation}
(1-\alpha) V_{B} - \alpha X^TV_{A} = (1- 2\alpha)W_{B}H_{B}
\end{equation}
Similarly, when we eliminate $W_{B}H_{B}$ term, we will get
\begin{equation}
(1-\alpha) V_{A} - \alpha XV_{B} = (1- 2\alpha)W_{A}H_{A}
\end{equation}
Based on the analysis in \cite{lee2001algorithms} for the classical single-domain matrix factorization, to show that the repeated iteration of update rules is guaranteed to converge to a locally optimal solution, it is sufficient to show that 
$$ \left\{
\begin{array}{rcl}
2\alpha - 1 < 0       &      &  (a)\\
(1-\alpha) V_{B} - \alpha X^TV_{A} \ge \textbf{0}    &      & (b)\\
(1-\alpha) V_{A} - \alpha XV_{B} \ge \textbf{0}     &      & (c)\\
\end{array} \right. $$
where \textbf{0} stands for the zero matrix. Condition (a) is an intuitive condition which indicates that the information from the target domain dominantly determines the user preference, while information transferred from the source domain only serves as the regularizer in the learning period. To fulfill the seemingly complicated condition (b) and (c), we recall that $V_{A}$ and $V_{B}$ are ratings matrices, which are non-negative and bounded by the rating scale $k$. We design two ''positive perturbation" $V_{A}' = V_{A} + mk\textbf{1}$ and $V_{B}' = V_{B} + mk\textbf{1}$ where $m$ is the rank of the mapping matrix $X$. Condition (a) is independent of the specific rating matrices, and we could check that $V_{A}'$ and $V_{B}'$ satisfies condition (b) and (c). Thus, the matrix factorization of $V_{A}'$ and $V_{B}'$ is guaranteed convergence; to reconstruct the original matrix $V_{A}$ and $V_{B}$, we only need to deduce the ''positive perturbation'' term $m$, so the original problem is also guaranteed convergence.
\end{proof}

In this paper, to capture latent interactions between users and items, we use the neural network based matrix factorization approach, therefore it still remains unclear if this convergence happens to the proposed model. However, our hypothesis is that even in our case we will experience similar convergence process guaranteed in the classical matrix factorization case, as stated in the proposition. We test the hypothesis in Section 6.

\subsection{Extension to Multiple Domains}
In previous sections, we describe the proposed DDTCDR model with special focus on the idea of combining dual transfer learning mechanism with latent embedding approaches. We point out that the proposed model not only works for cross domain recommendations between two domains, but can easily extend to recommendations between multiple domains as well. Consider $N$ different domains $D_{1}, D_{2}, D_{3}, \cdots, D_{n}$. To provide recommendations for domain $D_{k}$, we estimate the ratings similarly as the hybrid combination of within-domain estimation and cross-domain estimation:
\begin{equation}
r_k'= (1-\alpha)RS_{k}(W_{u_{k}},W_{i_{k}}) + \frac{\alpha}{n-1}\sum_{j=1;j\neq k}^{n} RS_{j}(X_{jk}*W_{u_{k}},W_{i_{k}})
\end{equation}
where $X_{jk}$ represents the latent orthogonal transfer matrix between domain $D_{j}$ and $D_{k}$ respectively. Therefore, the proposed model is capable of providing recommendations for multiple-domain application effectively and efficiently.

\section{Experiment}
To validate the performance of the proposed model, we compare the cross-domain recommendation accuracy with several state-of-the-art methods. In addition, we conduct multiple experiments to study sensitivity of hyperparameters of our proposed model.

\subsection{Dataset}
We use a large-scale anonymized dataset obtained from a European online recommendation service for carrying our experiments. It allows users to rate and review a range of items from various domains, while each domain was treated as an independent sub-site with separate within-domain recommendations. Consequently, the combination of explicit user feedback and diverse domains makes the dataset unique and valuable for cross-domain recommendations. We select the subset that includes three largest domains - books, movies and music, three domains being linked together through a common user ID identifying the same user. We normalize the scale of ratings between 0 and 1. Note that each user might purchase items from different domains, while each item only belongs to one single domain. The basic statistics about this dataset are shown in Table \ref{statisticalnumber}. 

\begin{table}
\centering
\begin{tabular}{l l l l}
\hline
Domain & Book & Movie & Music \\ \hline
\# Users & 804,285 & 959,502 & 45,962 \\ 
\# Items & 182,653 & 79,866 & 183,114 \\ 
\# Ratings & 223,007,805 & 51,269,130 & 2,536,273 \\ 
Density & 0.0157\% & 0.0669\% & 0.0301\% \\
\hline
\end{tabular}
\newline
\caption{Descriptive Statistics for the Dataset}
\label{statisticalnumber}
\end{table}

\subsection{User and Item Features}
Besides the records of interactions between users and items, the online platform also collects user information by asking them online questions when they are using the recommender services. Typical examples include questions about preferences between two items, ideas about recent events, life styles, demographic information, etc.. From all these questions, we select eight most-popular questions and use the corresponding answers from users as user features. Meanwhile, although the online platform does not collect directly item features, we obtain the metadata through Google Books API\footnote{https://developers.google.com/books/}, IMDB API\footnote{http://www.omdbapi.com/} and Spotify API\footnote{https://developer.spotify.com/documentation/web-api/}. To ensure the correctness of the collected item features, we validate the retrieved results with the timestamp included in the dataset. We describe the entire feature set used in our experiments in Table \ref{featureset}.

\begin{table}
\footnotesize
\centering
\begin{tabular}{l l c l}
\hline
Category & Feature Group & Dimensionality & Type \\ \hline
\multirow{8}{*}{User Features} & Gender & 2 & one-hot \\
                                                     & Age      & $\sim 10^2$ & numerical \\
                                                     & Movie Taste & 12 & one-hot \\
                                                     & Residence & 12 & one-hot \\
                                                     & Preferred Category & 9 & one-hot \\
                                                     & Recommendation Usage & 5 & one-hot \\
                                                     & Marital Status & 3 & one-hot \\
                                                     & Personality & 6 & one-hot \\ \hline
\multirow{8}{*}{Book Features}  & Category & 8 & one-hot \\
                                                      & Title         & $\sim 10^5$ & one-hot \\ 
                                                      & Author     & $\sim 10^4$ & multi-hot \\
                                                      & Publisher  & $\sim 10^2$ & one-hot  \\
                                                      & Language & 4 & one-hot \\
                                                      & Country    & 4 & one-hot \\
                                                      & Price         & $\sim 10^2$ & numeric \\
                                                      & Date         & $\sim 10^3$ & date \\ \hline
\multirow{8}{*}{Movie Features} & Genre       & 6 & one-hot \\
                                                      & Title          & $\sim 10^5$ & one-hot \\
                                                      & Director    & $\sim 10^3$ & multi-hot \\     
                                                      & Writer       & $\sim 10^3$ & multi-hot \\
                                                      & Runtime    & $\sim 10^3$ & numeric \\
                                                      & Country    & 4 & one-hot \\
                                                      & Rating       & $\sim 10^2$ & numeric \\
                                                      & Votes         & $\sim 10^4$ & numeric \\ \hline
\multirow{8}{*}{Music Features} & Listener     & $\sim 10^3$ & numeric \\
                                                      & Playcount & $\sim 10^3$ & numeric \\
                                                      & Artist         & $\sim 10^4$ & one-hot \\
                                                      & Album        & $\sim 10^4$ & one-hot \\
                                                      & Tag            & 8 & one-hot \\
                                                      & Release     & $\sim 10^3$ & date \\
                                                      & Duration    & $\sim 10^3$ & numeric \\
                                                      & Title           & $\sim 10^5$ & one-hot \\ \hline
\end{tabular}
\newline
\caption{Statistics of feature sets used in the purposed model}
\label{featureset}
\end{table}

\subsection{Baseline Models}
To conduct experiments and evaluations of our model, we utilize the record-stratified 5-fold cross validation and evaluate the recommendation performance based on RMSE, MAE, Precision and Recall metrics \cite{ricci2011introduction}. We compare the performance with a group of state-of-the-art methods.

\begin{itemize}
\item \textbf{CCFNet\cite{lian2017cccfnet}} Cross-domain Content-boosted Collaborative Filtering neural NETwork (CCCFNet) utilizes factorization to tie CF and content-based filtering together with a unified multi-view neural network.
\item \textbf{CDFM\cite{loni2014cross}} Cross Domain Factorization Machine (CDFM) proposes an extension of FMs that incorporates domain information in this pattern, which assumes that user interaction patterns differ sufficiently to make it advantageous to model domains separately. 
\item \textbf{CoNet\cite{hu2018conet}} Collaborative Cross Networks (CoNet) enables knowledge transfer across domains by cross connections between base networks.
\item \textbf{NCF\cite{he2017neural}} Neural Collaborative Filtering (NCF) is a neural network architecture to model latent features of users and items using collaborative filtering method. The NCF models are trained separately for each domain without transferring any information.
\item \textbf{CMF\cite{singh2008relational}}  Collective Matrix Factorization (CMF) simultaneously factor several matrices, sharing parameters among factors when a user participates in multiple domains.
\end{itemize}
 Furthermore, we select hyperparamter $\alpha$ for our study as 0.03 using Bayesian Optimization. We use one-layer MLP for constructing encoder and decoder separately. The size of feature embedding is 8. For each baseline method, we set the same hyperparameters as our proposed model if applicable. The results are reported in the next section.

\section{Results}

\subsection{Cross-Domain Recommendation Performance}
Since we have three different domains, i.e. books, movies and music, this results in three domain pairs for evaluation. The performance comparison results of DDTCDR with the baselines are reported using the experimental settings in Section 4. 

As shown in Table \ref{dual1}, \ref{dual2} and \ref{dual3}, the proposed DDTCDR model significantly and consistently outperforms all other baselines in terms of RMSE, MAE, Precision and Recall metrics across all the three domain pairs. As Table 3 shows, RMSE measures for book vs. movie domains obtained using the proposed model are 0.2213 and 0.2213, which outperform the second-best baselines by 3.98\% and 2.44\%; MAE measures are 0.1708 and 0.1704, which outperform the second-best baselines by 9.54\% and 9.80\%. We observe similar improvements in book vs. music domains where DDTCDR outperforms the second-best baselines by 4.07\%, 8.87\%, 2.14\%, 4.74\% and improvements in movie vs. music domains by 3.75\%, 9.77\%, 1.89\%, 4.24\% respectively. To summarize, all these results show that the dual transfer learning mechanism works well in practice, and the proposed DDTCDR model achieves significant cross-domain recommendation performance improvements. It is also worth noticing that DDTCDR only requires the same level of time complexity for training compared with other baseline models.

Furthermore, we observe that the improvements of proposed model on the book and movie domains are relatively greater than that of the music domain. One possible reason is the size of the dataset, for the book and movie domains are significantly larger than the music domain. We plan to explore this point further as a topic of future research.

\begin{table*}
\centering
\begin{tabular}{|c|cccc|cccc|} \hline
\multirow{2}{*}{Algorithm} & \multicolumn{4}{c|}{Book} & \multicolumn{4}{c|}{Movie} \\ \cline{2-9}
               &  RMSE & MAE & Precision@5 & Recall@5 & RMSE & MAE & Precision@5 & Recall@5 \\ \hline
\textbf{DDTCDR} & \textbf{0.2213*} & \textbf{0.1708*} & \textbf{0.8595*} & \textbf{0.9594*} & \textbf{0.2213*} & \textbf{0.1714*} & \textbf{0.8925*} & \textbf{0.9871*} \\
Improved \% & (+3.98\%) & (+9.54\%) & (+2.77\%) & (+6.30\%) & (+2.44\%) & (+9.80\%) & (+2.75\%) & (+2.74\%) \\ \hline
NCF & 0.2315 & 0.1887 & 0.8357 & 0.8924 & 0.2276 & 0.1895 & 0.8644 & 0.9589 \\ \hline
CCFNet & 0.2639 & 0.1841 & 0.8102 & 0.8872 & 0.2476 & 0.1939 & 0.8545 & 0.9300 \\ \hline
CDFM & 0.2494 & 0.2165 & 0.7978 & 0.8610 & 0.2289 & 0.1901 & 0.8498 & 0.9312 \\ \hline
CMF & 0.2921 & 0.2478 & 0.7972 & 0.8523 & 0.2738 & 0.2293 & 0.8324 & 0.9012 \\ \hline
CoNet & 0.2305 & 0.1892 & 0.8328 & 0.8990 & 0.2298 & 0.1903 & 0.8680 & 0.9601 \\ \hline
\end{tabular}
\newline
\caption{Comparison of recommendation performance in Book-Movie Dual Recommendation: Improved Percentage versus the second best baselines}
\label{dual1}
\end{table*}

\begin{table*}
\centering
\begin{tabular}{|c|cccc|cccc|} \hline
\multirow{2}{*}{Algorithm} & \multicolumn{4}{c|}{Book} & \multicolumn{4}{c|}{Music} \\ \cline{2-9}
               &  RMSE & MAE & Precision@5 & Recall@5 & RMSE & MAE & Precision@5 & Recall@5 \\ \hline
\textbf{DDTCDR} & \textbf{0.2209*} & \textbf{0.1704*} & \textbf{0.8570*} & \textbf{0.9602*} & \textbf{0.2753*} & \textbf{0.2302*} & \textbf{0.8392*} & \textbf{0.8928*} \\
Improved \% & (+4.07\%) & (+8.87\%) & (+3.97\%) & (+3.15\%) & (+2.14\%) & (+4.74\%) & (+5.51\%) & (+5.35\%) \\ \hline
NCF & 0.2315 & 0.1887 & 0.8230 & 0.9294 & 0.2828 & 0.2423 & 0.7930 & 0.8450 \\ \hline
CCFNet & 0.2630 & 0.1842 & 0.8150 & 0.9108 & 0.3090 & 0.2422 & 0.7902 & 0.8388 \\ \hline
CDFM & 0.2489 & 0.2155 & 0.8104 & 0.9102 & 0.3252 & 0.2463 & 0.7895 & 0.8365 \\ \hline
CMF & 0.2921 & 0.2478 & 0.8072 & 0.8978 & 0.3478 & 0.2698 & 0.7820 & 0.8324 \\ \hline
CoNet & 0.2307 & 0.1897 & 0.8230 & 0.9300 & 0.2801 & 0.2410 & 0.7912 & 0.8428 \\ \hline
\end{tabular}
\newline
\caption{Comparison of recommendation performance in Book-Music Dual Recommendation: Improved Percentage versus the second best baselines}
\label{dual2}
\end{table*}

\begin{table*}
\centering
\begin{tabular}{|c|cccc|cccc|} \hline
\multirow{2}{*}{Algorithm} & \multicolumn{4}{c|}{Movie} & \multicolumn{4}{c|}{Music} \\ \cline{2-9}
               &  RMSE & MAE & Precision@5 & Recall@5 & RMSE & MAE & Precision@5 & Recall@5 \\ \hline
\textbf{DDTCDR} & \textbf{0.2174*} & \textbf{0.1720*} & \textbf{0.8926*} & \textbf{0.9869*} & \textbf{0.2758*} & \textbf{0.2311*} & \textbf{0.8370*} & \textbf{0.8902*} \\
Improved \% & (+3.75\%) & (+9.77\%) & (+5.32\%) & (+3.68\%) & (+1.89\%) & (+4.24\%) & (+4.30\%) & (+4.38\%) \\ \hline
NCF &  0.2276 & 0.1895 & 0.8428 & 0.9495 & 0.2828 & 0.2423 & 0.7970 & 0.8501 \\ \hline
CCFNet & 0.2468 & 0.1932 & 0.8398 & 0.9310 & 0.3090 & 0.2433 & 0.7952 & 0.8498 \\ \hline
CDFM & 0.2289 & 0.1895 & 0.8306 & 0.9382 & 0.3252 & 0.2467 & 0.7880 & 0.8460 \\ \hline
CMF & 0.2738 & 0.2293 & 0.8278 & 0.9222 & 0.3478 & 0.2698  & 0.7796 & 0.8400 \\ \hline
CoNet & 0.2302 & 0.1908 & 0.8450 & 0.9508 & 0.2811 & 0.2428 & 0.8010 & 0.8512 \\ \hline
\end{tabular}
\newline
\caption{Comparison of recommendation performance in Movie-Music Dual Recommendation: Improved Percentage versus the second best baselines}
\label{dual3}
\end{table*}

\subsection{Convergence}
In Section 3.4, we proof convergence of the dual transfer learning method for classical matrix factorization problem. Note that however, this proposition is not applicable to the DDTCDR model because the optimization process is different. Therefore, we test the convergence empirically in our study, and conjecture should still happen even though our method does not satisfy the condition. We conduct the convergence study to empirically demonstrate the convergence. 

In particular, we train the model iteratively for 100 epochs until the change of loss function is less than 1e-5. We plot the training loss over time in Figure \ref{book_epoch}, \ref{movie_epoch} and \ref{music_epoch} for three domain pairs. The key observation is that, DDTCDR starts with relatively higher loss due to the noisy initialization. As times goes by, DDTCDR manages to stabilize quickly and significantly outperforms NCF only after 10 epochs.

\begin{figure}[h]
\centering
\includegraphics[width=0.4\textwidth]{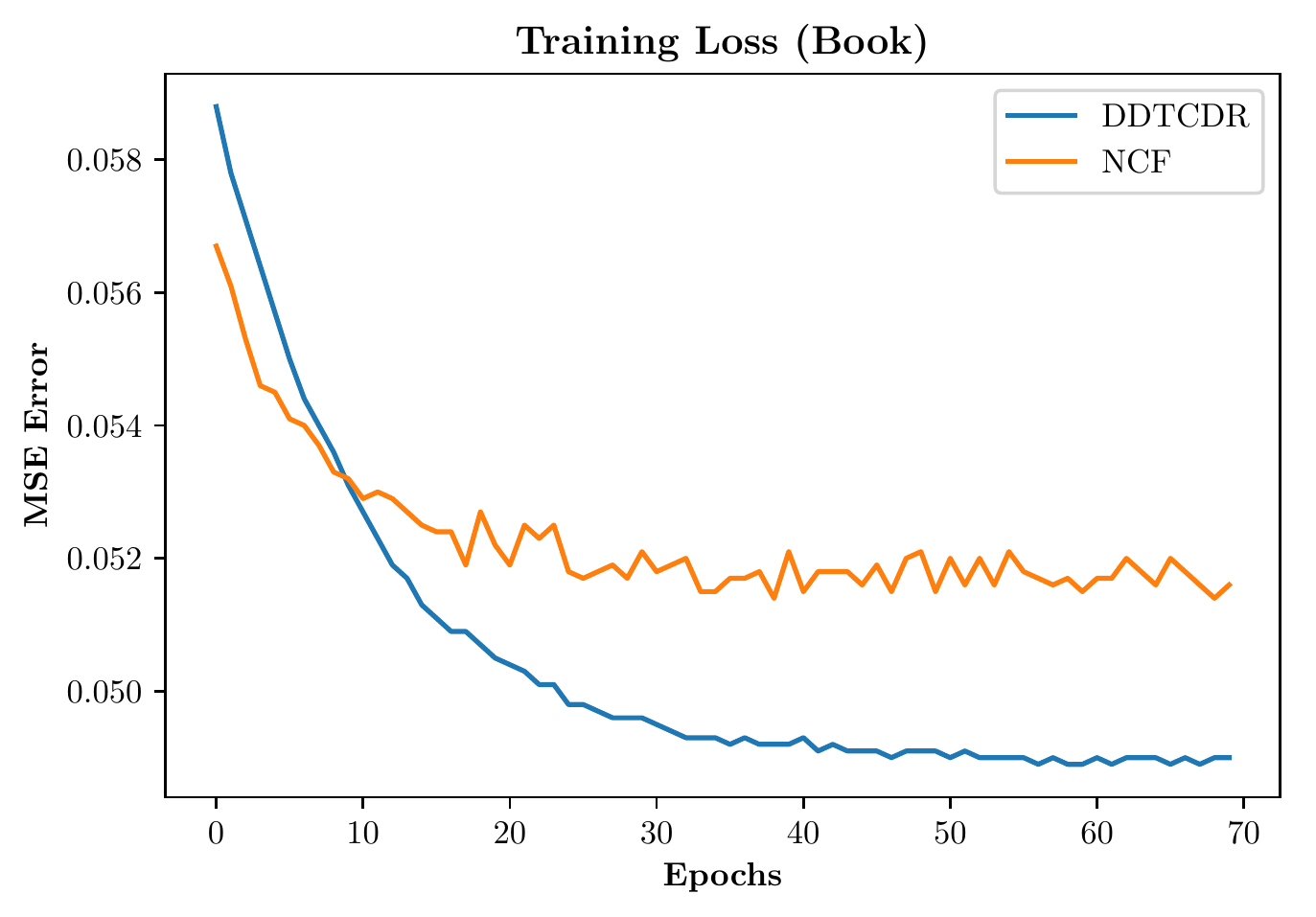}
\caption{Book Domain Epoch-Loss}
\label{book_epoch}
\end{figure}

\begin{figure}[h]
\centering
\includegraphics[width=0.4\textwidth]{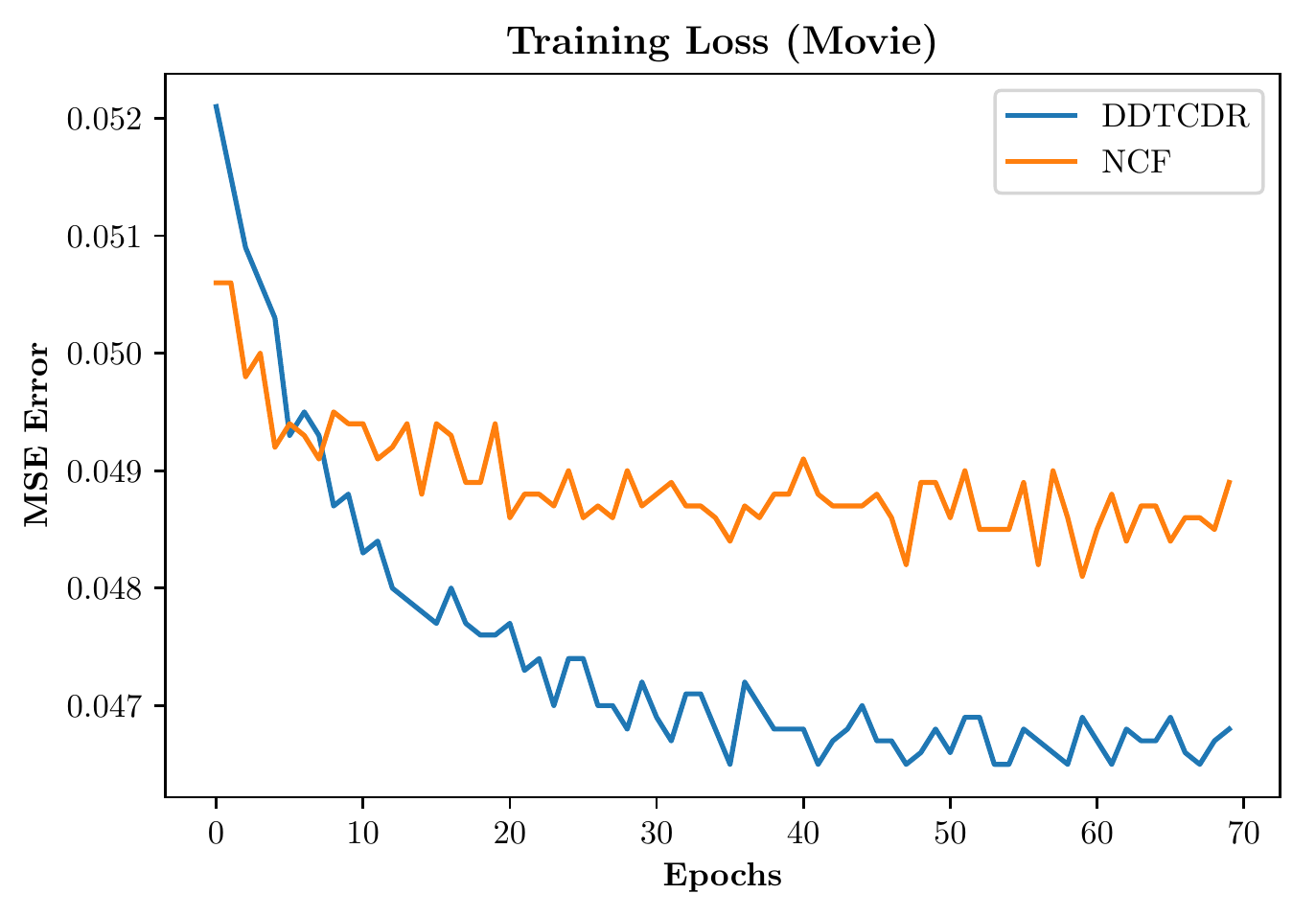}
\caption{Movie Domain Epoch-Loss}
\label{movie_epoch}
\end{figure}

\begin{figure}[h]
\centering
\includegraphics[width=0.4\textwidth]{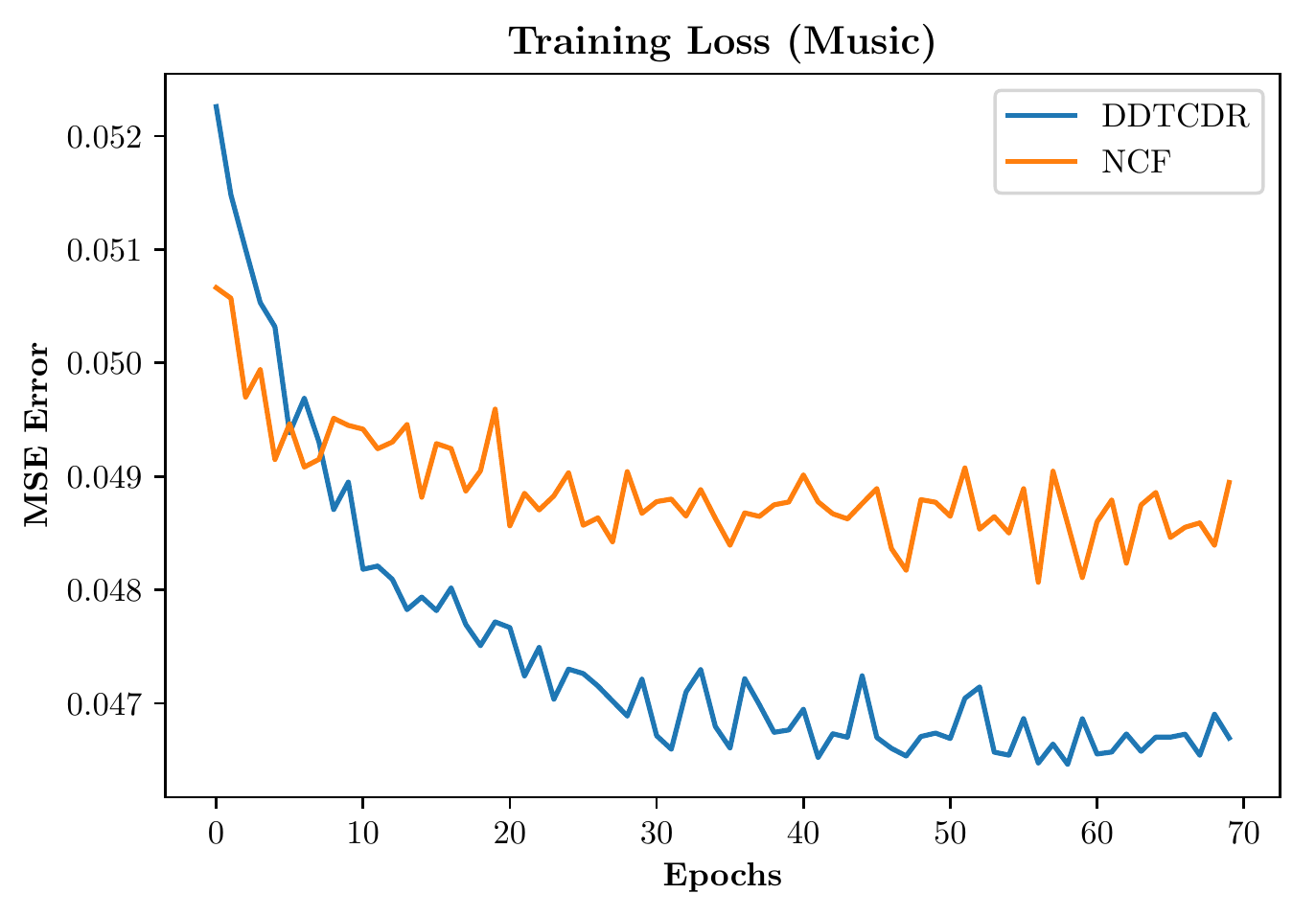}
\caption{Music Domain Epoch-Loss}
\label{music_epoch}
\end{figure}

\subsection{Sensitivity}
To validate that the improvement gained in our model is not sensitive to the specific experimental setting, we conduct multiple experiments to examine the change of recommendation performance corresponding to different hyperparameter settings. As Figure \ref{book_alpha}, \ref{movie_alpha} and \ref{music_alpha} show, we could observe certain amount of performance fluctuation with respect to transfer rate $\alpha$. Note that the cross-domain recommendation results (when $\alpha$ take positive and small value) are consistently better than that of single-domain recommendations (when $\alpha = 0$). Also, we verify the recommendation performance using different types of autoencoders, including AE \cite{sutskever2014sequence}, VAE \cite{kingma2013auto}, AAE \cite{makhzani2015adversarial}, WAE \cite{tolstikhin2017wasserstein} and HVAE \cite{davidson2018hyperspherical}, and the improvement of recommendation performance is consistent across these settings as shown in Table \ref{autoencoder}. The choice of particular autoencoder is not relevant to the recommendation performance.

\begin{figure}[h]
\centering
\includegraphics[width=0.35\textwidth]{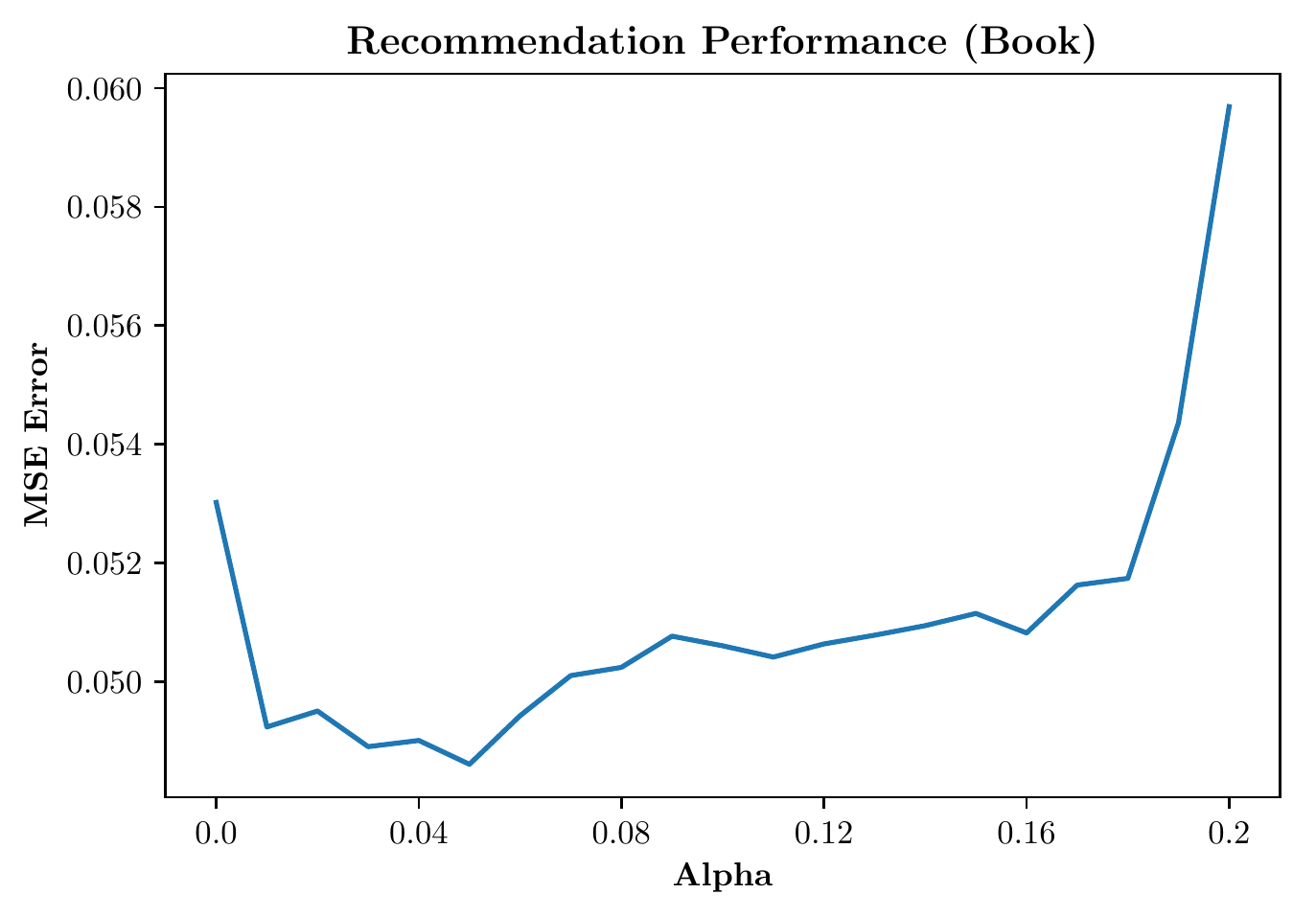}
\caption{Recommendation Performance in Book Domain with Different Alpha Values}
\label{book_alpha}
\end{figure}

\begin{figure}[h]
\centering
\includegraphics[width=0.35\textwidth]{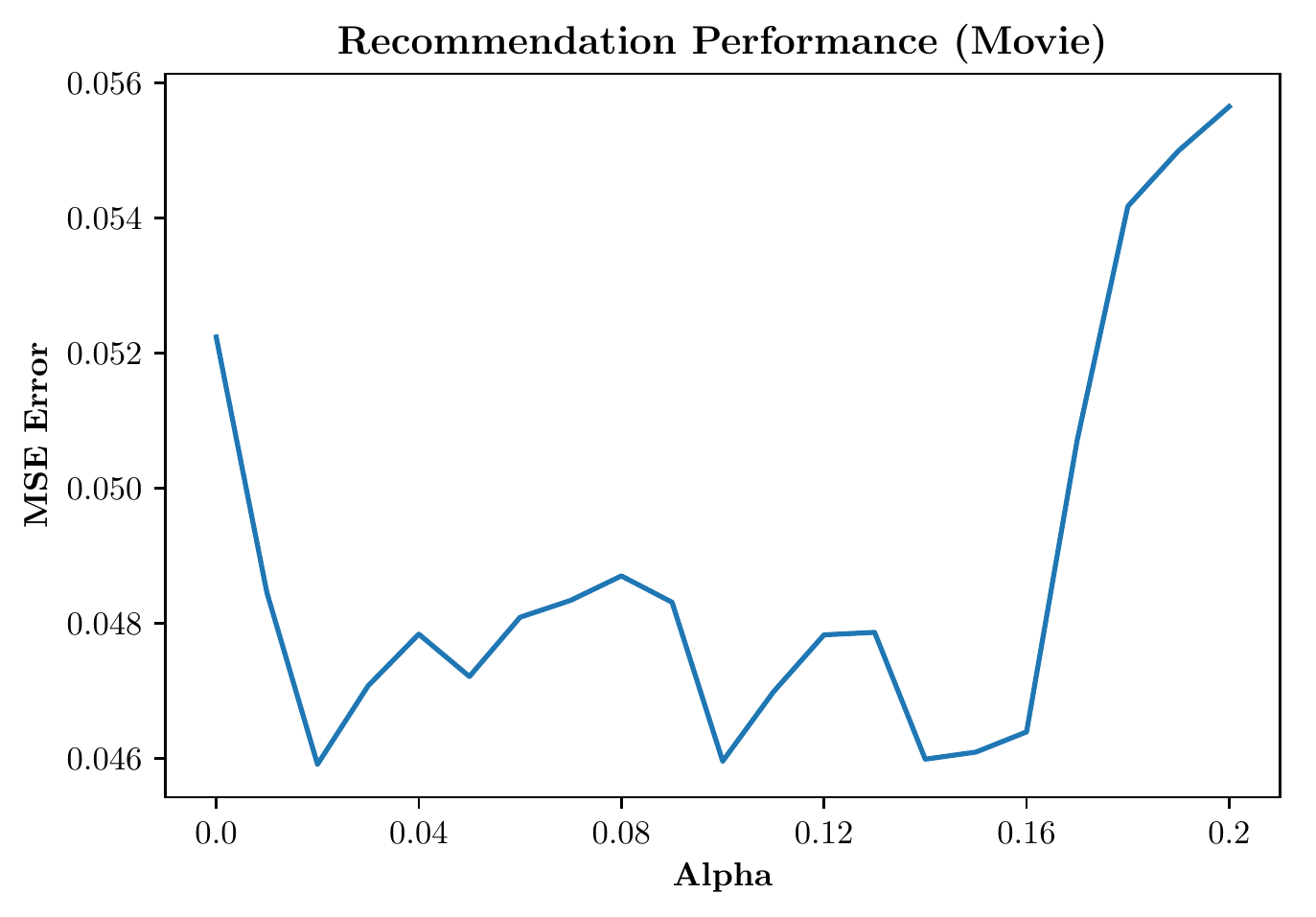}
\caption{Recommendation Performance in Movie Domain with Different Alpha Values}
\label{movie_alpha}
\end{figure}

\begin{figure}[h]
\centering
\includegraphics[width=0.35\textwidth]{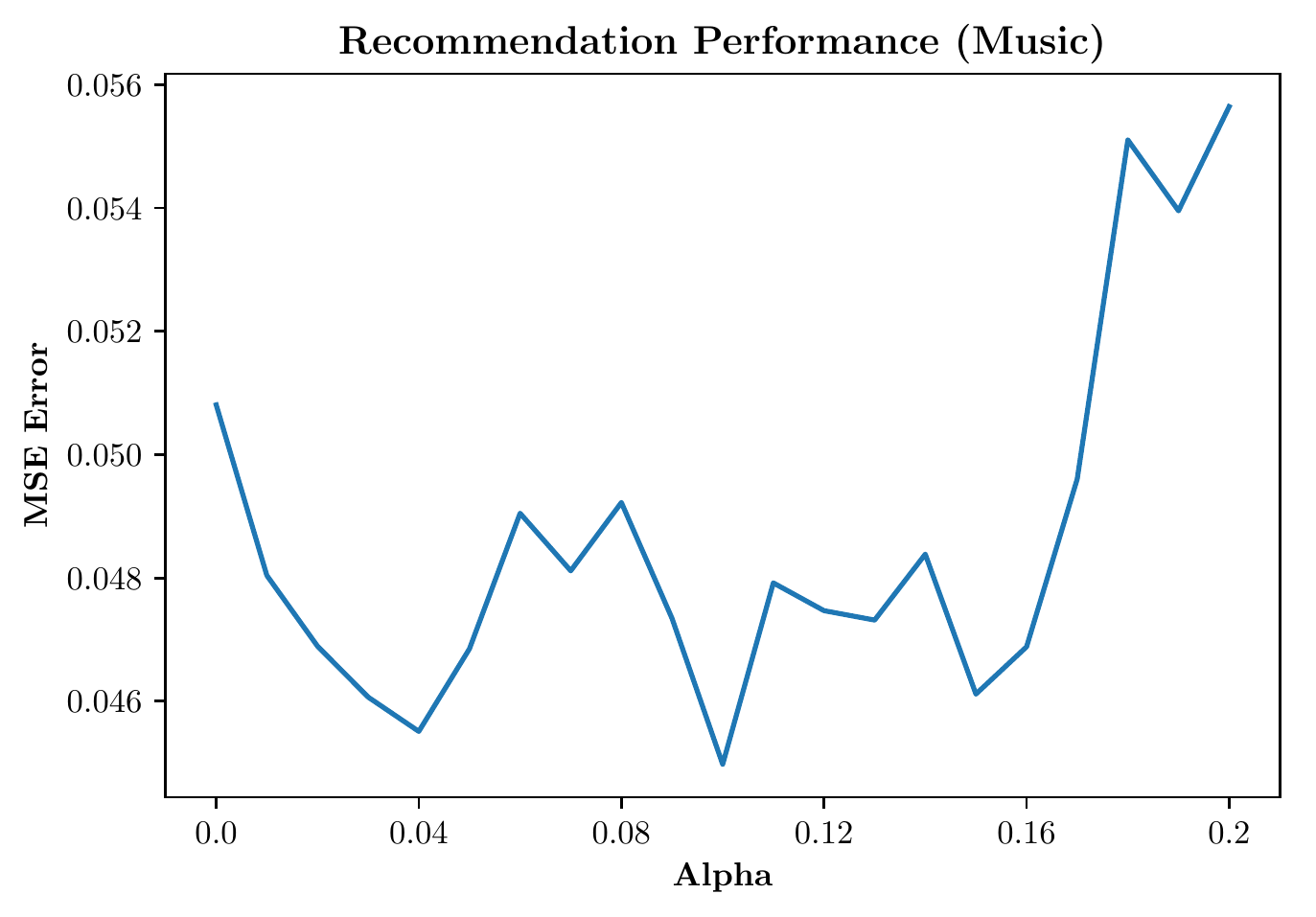}
\caption{Recommendation Performance in Music Domain with Different Alpha Values}
\label{music_alpha}
\end{figure}

\begin{table}
\centering
\begin{tabular}{|c|cc|cc|} \hline
\multirow{2}{*}{Autoencoder} & \multicolumn{4}{c|}{Book $\Longleftrightarrow$ Movie} \\ \cline{2-5}
               &  RMSE & MAE & RMSE & MAE \\ \hline
AE  & 0.2213 & 0.1708 & 0.2213 & 0.1714  \\ \hline
VAE &  0.2240 &  0.1704 &  0.2196 & 0.1707 \\ \hline
AAE & 0.2236  & 0.1729  & 0.2195  &  0.1715 \\ \hline
WAE & 0.2236 & 0.1739  & 0.2202 & 0.1739 \\ \hline
HVAE & 0.2220 & 0.1717 & 0.2186 &  0.1704 \\ \hline
\end{tabular}
\newline
\caption{Comparison of Autoencoder Settings: Differences are not statistically significant}
\label{autoencoder}
\end{table}

\section{Conclusion}
In this paper, we propose a novel dual transfer learning based model that significantly improves recommendation performance across different domains. We accomplish this by transferring latent information from one domain to the other through embeddings and iteratively going through the transfer learning loop until the models stabilize for both domains. We also prove that this convergence is guaranteed under certain conditions and empirically validate the hypothesis for our model across different experimental settings. 

Note that, the proposed approach provides several benefits, including that it (a) transfers information about latent interactions instead of explicit features from the source domain to the target domain; (b) utilizes the dual transfer learning mechanism to enable the bidirectional training that improves performance measures for both domains simultaneously; (c) learns the latent orthogonal mapping function across two domains, that (i) preserves similarity of user preferences and thus enables proper transfer learning and (ii) computes the inverse mapping function efficiently.

As the future work, we plan to extend dual learning mechanism to multiple domains by simultaneously improving performance across all domains instead of domain pairs. We also plan to extend the convergence proposition to more general settings.

%
\bibliographystyle{ACM-Reference-Format}
\bibliography{sigproc}


\begin{thebibliography}{41}


\ifx \showCODEN    \undefined \def \showCODEN     #1{\unskip}     \fi
\ifx \showDOI      \undefined \def \showDOI       #1{#1}\fi
\ifx \showISBNx    \undefined \def \showISBNx     #1{\unskip}     \fi
\ifx \showISBNxiii \undefined \def \showISBNxiii  #1{\unskip}     \fi
\ifx \showISSN     \undefined \def \showISSN      #1{\unskip}     \fi
\ifx \showLCCN     \undefined \def \showLCCN      #1{\unskip}     \fi
\ifx \shownote     \undefined \def \shownote      #1{#1}          \fi
\ifx \showarticletitle \undefined \def \showarticletitle #1{#1}   \fi
\ifx \showURL      \undefined \def \showURL       {\relax}        \fi
\providecommand\bibfield[2]{#2}
\providecommand\bibinfo[2]{#2}
\providecommand\natexlab[1]{#1}
\providecommand\showeprint[2][]{arXiv:#2}

\bibitem[\protect\citeauthoryear{Adomavicius and Tuzhilin}{Adomavicius and
  Tuzhilin}{2005}]%
        {adomavicius2005toward}
\bibfield{author}{\bibinfo{person}{Gediminas Adomavicius} {and}
  \bibinfo{person}{Alexander Tuzhilin}.} \bibinfo{year}{2005}\natexlab{}.
\newblock \showarticletitle{Toward the next generation of recommender systems:
  A survey of the state-of-the-art and possible extensions}.
\newblock \bibinfo{journal}{\emph{IEEE Transactions on Knowledge \& Data
  Engineering}} \bibinfo{number}{6} (\bibinfo{year}{2005}),
  \bibinfo{pages}{734--749}.
\newblock


\bibitem[\protect\citeauthoryear{Davidson, Falorsi, De~Cao, Kipf, and
  Tomczak}{Davidson et~al\mbox{.}}{2018}]%
        {davidson2018hyperspherical}
\bibfield{author}{\bibinfo{person}{Tim~R Davidson}, \bibinfo{person}{Luca
  Falorsi}, \bibinfo{person}{Nicola De~Cao}, \bibinfo{person}{Thomas Kipf},
  {and} \bibinfo{person}{Jakub~M Tomczak}.} \bibinfo{year}{2018}\natexlab{}.
\newblock \showarticletitle{Hyperspherical variational auto-encoders}.
\newblock \bibinfo{journal}{\emph{arXiv preprint arXiv:1804.00891}}
  (\bibinfo{year}{2018}).
\newblock


\bibitem[\protect\citeauthoryear{Fern{\'a}ndez-Tob{\'\i}as, Cantador,
  Kaminskas, and Ricci}{Fern{\'a}ndez-Tob{\'\i}as et~al\mbox{.}}{[n.d.]}]%
        {fernandez2012cross}
\bibfield{author}{\bibinfo{person}{Ignacio Fern{\'a}ndez-Tob{\'\i}as},
  \bibinfo{person}{Iv{\'a}n Cantador}, \bibinfo{person}{Marius Kaminskas},
  {and} \bibinfo{person}{Francesco Ricci}.} \bibinfo{year}{[n.d.]}\natexlab{}.
\newblock \showarticletitle{Cross-domain recommender systems: A survey of the
  state of the art}.
\newblock


\bibitem[\protect\citeauthoryear{He, Xia, Qin, Wang, Yu, Liu, and Ma}{He
  et~al\mbox{.}}{2016}]%
        {he2016dual}
\bibfield{author}{\bibinfo{person}{Di He}, \bibinfo{person}{Yingce Xia},
  \bibinfo{person}{Tao Qin}, \bibinfo{person}{Liwei Wang},
  \bibinfo{person}{Nenghai Yu}, \bibinfo{person}{Tie-Yan Liu}, {and}
  \bibinfo{person}{Wei-Ying Ma}.} \bibinfo{year}{2016}\natexlab{}.
\newblock \showarticletitle{Dual learning for machine translation}. In
  \bibinfo{booktitle}{\emph{Advances in Neural Information Processing
  Systems}}. \bibinfo{pages}{820--828}.
\newblock


\bibitem[\protect\citeauthoryear{He, Liao, Zhang, Nie, Hu, and Chua}{He
  et~al\mbox{.}}{2017}]%
        {he2017neural}
\bibfield{author}{\bibinfo{person}{Xiangnan He}, \bibinfo{person}{Lizi Liao},
  \bibinfo{person}{Hanwang Zhang}, \bibinfo{person}{Liqiang Nie},
  \bibinfo{person}{Xia Hu}, {and} \bibinfo{person}{Tat-Seng Chua}.}
  \bibinfo{year}{2017}\natexlab{}.
\newblock \showarticletitle{Neural collaborative filtering}. In
  \bibinfo{booktitle}{\emph{Proceedings of the 26th International Conference on
  World Wide Web}}. International World Wide Web Conferences Steering
  Committee, \bibinfo{pages}{173--182}.
\newblock


\bibitem[\protect\citeauthoryear{Hu, Zhang, and Yang}{Hu et~al\mbox{.}}{2018}]%
        {hu2018conet}
\bibfield{author}{\bibinfo{person}{Guangneng Hu}, \bibinfo{person}{Yu Zhang},
  {and} \bibinfo{person}{Qiang Yang}.} \bibinfo{year}{2018}\natexlab{}.
\newblock \showarticletitle{CoNet: Collaborative Cross Networks for
  Cross-Domain Recommendation}. In \bibinfo{booktitle}{\emph{Proceedings of the
  27th ACM International Conference on Information and Knowledge Management}}.
  ACM, \bibinfo{pages}{667--676}.
\newblock


\bibitem[\protect\citeauthoryear{Hu, Cao, Xu, Cao, Gu, and Zhu}{Hu
  et~al\mbox{.}}{2013}]%
        {hu2013personalized}
\bibfield{author}{\bibinfo{person}{Liang Hu}, \bibinfo{person}{Jian Cao},
  \bibinfo{person}{Guandong Xu}, \bibinfo{person}{Longbing Cao},
  \bibinfo{person}{Zhiping Gu}, {and} \bibinfo{person}{Can Zhu}.}
  \bibinfo{year}{2013}\natexlab{}.
\newblock \showarticletitle{Personalized recommendation via cross-domain
  triadic factorization}. In \bibinfo{booktitle}{\emph{Proceedings of the 22nd
  international conference on World Wide Web}}. ACM, \bibinfo{pages}{595--606}.
\newblock


\bibitem[\protect\citeauthoryear{Kingma and Welling}{Kingma and
  Welling}{2013}]%
        {kingma2013auto}
\bibfield{author}{\bibinfo{person}{Diederik~P Kingma} {and}
  \bibinfo{person}{Max Welling}.} \bibinfo{year}{2013}\natexlab{}.
\newblock \showarticletitle{Auto-encoding variational bayes}.
\newblock \bibinfo{journal}{\emph{arXiv preprint arXiv:1312.6114}}
  (\bibinfo{year}{2013}).
\newblock


\bibitem[\protect\citeauthoryear{Koren, Bell, and Volinsky}{Koren
  et~al\mbox{.}}{2009}]%
        {koren2009matrix}
\bibfield{author}{\bibinfo{person}{Yehuda Koren}, \bibinfo{person}{Robert
  Bell}, {and} \bibinfo{person}{Chris Volinsky}.}
  \bibinfo{year}{2009}\natexlab{}.
\newblock \showarticletitle{Matrix factorization techniques for recommender
  systems}.
\newblock \bibinfo{journal}{\emph{Computer}} \bibinfo{number}{8}
  (\bibinfo{year}{2009}), \bibinfo{pages}{30--37}.
\newblock


\bibitem[\protect\citeauthoryear{Lee and Seung}{Lee and Seung}{2001}]%
        {lee2001algorithms}
\bibfield{author}{\bibinfo{person}{Daniel~D Lee} {and}
  \bibinfo{person}{H~Sebastian Seung}.} \bibinfo{year}{2001}\natexlab{}.
\newblock \showarticletitle{Algorithms for non-negative matrix factorization}.
  In \bibinfo{booktitle}{\emph{Advances in neural information processing
  systems}}. \bibinfo{pages}{556--562}.
\newblock


\bibitem[\protect\citeauthoryear{Li, Yang, and Xue}{Li et~al\mbox{.}}{2009}]%
        {li2009can}
\bibfield{author}{\bibinfo{person}{Bin Li}, \bibinfo{person}{Qiang Yang}, {and}
  \bibinfo{person}{Xiangyang Xue}.} \bibinfo{year}{2009}\natexlab{}.
\newblock \showarticletitle{Can movies and books collaborate? cross-domain
  collaborative filtering for sparsity reduction}. In
  \bibinfo{booktitle}{\emph{Twenty-First International Joint Conference on
  Artificial Intelligence}}.
\newblock


\bibitem[\protect\citeauthoryear{Li and She}{Li and She}{2017}]%
        {li2017collaborative}
\bibfield{author}{\bibinfo{person}{Xiaopeng Li} {and} \bibinfo{person}{James
  She}.} \bibinfo{year}{2017}\natexlab{}.
\newblock \showarticletitle{Collaborative variational autoencoder for
  recommender systems}. In \bibinfo{booktitle}{\emph{Proceedings of the 23rd
  ACM SIGKDD}}. ACM, \bibinfo{pages}{305--314}.
\newblock


\bibitem[\protect\citeauthoryear{Lian, Zhang, Xie, and Sun}{Lian
  et~al\mbox{.}}{2017}]%
        {lian2017cccfnet}
\bibfield{author}{\bibinfo{person}{Jianxun Lian}, \bibinfo{person}{Fuzheng
  Zhang}, \bibinfo{person}{Xing Xie}, {and} \bibinfo{person}{Guangzhong Sun}.}
  \bibinfo{year}{2017}\natexlab{}.
\newblock \showarticletitle{CCCFNet: a content-boosted collaborative filtering
  neural network for cross domain recommender systems}. In
  \bibinfo{booktitle}{\emph{Proceedings of the 26th international conference on
  World Wide Web companion}}. International World Wide Web Conferences Steering
  Committee, \bibinfo{pages}{817--818}.
\newblock


\bibitem[\protect\citeauthoryear{Liang, Krishnan, Hoffman, and Jebara}{Liang
  et~al\mbox{.}}{2018}]%
        {liang2018variational}
\bibfield{author}{\bibinfo{person}{Dawen Liang}, \bibinfo{person}{Rahul~G
  Krishnan}, \bibinfo{person}{Matthew~D Hoffman}, {and} \bibinfo{person}{Tony
  Jebara}.} \bibinfo{year}{2018}\natexlab{}.
\newblock \showarticletitle{Variational autoencoders for collaborative
  filtering}. In \bibinfo{booktitle}{\emph{Proceedings of the 2018 World Wide
  Web Conference on World Wide Web}}. International World Wide Web Conferences
  Steering Committee.
\newblock


\bibitem[\protect\citeauthoryear{Lin}{Lin}{2007}]%
        {lin2007convergence}
\bibfield{author}{\bibinfo{person}{Chih-Jen Lin}.}
  \bibinfo{year}{2007}\natexlab{}.
\newblock \showarticletitle{On the convergence of multiplicative update
  algorithms for nonnegative matrix factorization}.
\newblock \bibinfo{journal}{\emph{IEEE Transactions on Neural Networks}}
  \bibinfo{volume}{18}, \bibinfo{number}{6} (\bibinfo{year}{2007}),
  \bibinfo{pages}{1589--1596}.
\newblock


\bibitem[\protect\citeauthoryear{Liu, Wang, Gao, and Han}{Liu
  et~al\mbox{.}}{2013}]%
        {liu2013multi}
\bibfield{author}{\bibinfo{person}{Jialu Liu}, \bibinfo{person}{Chi Wang},
  \bibinfo{person}{Jing Gao}, {and} \bibinfo{person}{Jiawei Han}.}
  \bibinfo{year}{2013}\natexlab{}.
\newblock \showarticletitle{Multi-view clustering via joint nonnegative matrix
  factorization}. In \bibinfo{booktitle}{\emph{Proceedings of the 2013 SIAM
  International Conference on Data Mining}}. SIAM.
\newblock


\bibitem[\protect\citeauthoryear{Long, Wang, Ding, Cheng, Zhang, and Wang}{Long
  et~al\mbox{.}}{2012}]%
        {long2012dual}
\bibfield{author}{\bibinfo{person}{Mingsheng Long}, \bibinfo{person}{Jianmin
  Wang}, \bibinfo{person}{Guiguang Ding}, \bibinfo{person}{Wei Cheng},
  \bibinfo{person}{Xiang Zhang}, {and} \bibinfo{person}{Wei Wang}.}
  \bibinfo{year}{2012}\natexlab{}.
\newblock \showarticletitle{Dual transfer learning}. In
  \bibinfo{booktitle}{\emph{Proceedings of the 2012 SIAM International
  Conference on Data Mining}}. SIAM.
\newblock


\bibitem[\protect\citeauthoryear{Long, Wang, Ding, Sun, and Yu}{Long
  et~al\mbox{.}}{2013}]%
        {long2013transfer}
\bibfield{author}{\bibinfo{person}{Mingsheng Long}, \bibinfo{person}{Jianmin
  Wang}, \bibinfo{person}{Guiguang Ding}, \bibinfo{person}{Jiaguang Sun}, {and}
  \bibinfo{person}{Philip~S Yu}.} \bibinfo{year}{2013}\natexlab{}.
\newblock \showarticletitle{Transfer feature learning with joint distribution
  adaptation}. In \bibinfo{booktitle}{\emph{Proceedings of the IEEE ICCV}}.
\newblock


\bibitem[\protect\citeauthoryear{Loni, Shi, Larson, and Hanjalic}{Loni
  et~al\mbox{.}}{2014}]%
        {loni2014cross}
\bibfield{author}{\bibinfo{person}{Babak Loni}, \bibinfo{person}{Yue Shi},
  \bibinfo{person}{Martha Larson}, {and} \bibinfo{person}{Alan Hanjalic}.}
  \bibinfo{year}{2014}\natexlab{}.
\newblock \showarticletitle{Cross-domain collaborative filtering with
  factorization machines}. In \bibinfo{booktitle}{\emph{European conference on
  information retrieval}}. Springer, \bibinfo{pages}{656--661}.
\newblock


\bibitem[\protect\citeauthoryear{Makhzani, Shlens, Jaitly, Goodfellow, and
  Frey}{Makhzani et~al\mbox{.}}{2015}]%
        {makhzani2015adversarial}
\bibfield{author}{\bibinfo{person}{Alireza Makhzani}, \bibinfo{person}{Jonathon
  Shlens}, \bibinfo{person}{Navdeep Jaitly}, \bibinfo{person}{Ian Goodfellow},
  {and} \bibinfo{person}{Brendan Frey}.} \bibinfo{year}{2015}\natexlab{}.
\newblock \showarticletitle{Adversarial autoencoders}.
\newblock \bibinfo{journal}{\emph{arXiv preprint arXiv:1511.05644}}
  (\bibinfo{year}{2015}).
\newblock


\bibitem[\protect\citeauthoryear{Pan and Yang}{Pan and Yang}{2010}]%
        {pan2010survey}
\bibfield{author}{\bibinfo{person}{Sinno~Jialin Pan} {and}
  \bibinfo{person}{Qiang Yang}.} \bibinfo{year}{2010}\natexlab{}.
\newblock \showarticletitle{A survey on transfer learning}.
\newblock \bibinfo{journal}{\emph{IEEE Transactions on knowledge and data
  engineering}} (\bibinfo{year}{2010}).
\newblock


\bibitem[\protect\citeauthoryear{Ricci, Rokach, and Shapira}{Ricci
  et~al\mbox{.}}{2011}]%
        {ricci2011introduction}
\bibfield{author}{\bibinfo{person}{Francesco Ricci}, \bibinfo{person}{Lior
  Rokach}, {and} \bibinfo{person}{Bracha Shapira}.}
  \bibinfo{year}{2011}\natexlab{}.
\newblock \showarticletitle{Introduction to recommender systems handbook}.
\newblock In \bibinfo{booktitle}{\emph{Recommender systems handbook}}.
  \bibinfo{publisher}{Springer}, \bibinfo{pages}{1--35}.
\newblock


\bibitem[\protect\citeauthoryear{Sahebi and Brusilovsky}{Sahebi and
  Brusilovsky}{2013}]%
        {sahebi2013cross}
\bibfield{author}{\bibinfo{person}{Shaghayegh Sahebi} {and}
  \bibinfo{person}{Peter Brusilovsky}.} \bibinfo{year}{2013}\natexlab{}.
\newblock \showarticletitle{Cross-domain collaborative recommendation in a
  cold-start context: The impact of user profile size on the quality of
  recommendation}. In \bibinfo{booktitle}{\emph{International Conference on
  User Modeling, Adaptation, and Personalization}}.
\newblock


\bibitem[\protect\citeauthoryear{Sahebi and Brusilovsky}{Sahebi and
  Brusilovsky}{2015}]%
        {sahebi2015takes}
\bibfield{author}{\bibinfo{person}{Shaghayegh Sahebi} {and}
  \bibinfo{person}{Peter Brusilovsky}.} \bibinfo{year}{2015}\natexlab{}.
\newblock \showarticletitle{It takes two to tango: An exploration of domain
  pairs for cross-domain collaborative filtering}. In
  \bibinfo{booktitle}{\emph{Proceedings of the 9th ACM RecSys}}. ACM,
  \bibinfo{pages}{131--138}.
\newblock


\bibitem[\protect\citeauthoryear{Sahebi, Brusilovsky, and Bobrokov}{Sahebi
  et~al\mbox{.}}{2017}]%
        {sahebi2017cross}
\bibfield{author}{\bibinfo{person}{Shaghayegh Sahebi}, \bibinfo{person}{Peter
  Brusilovsky}, {and} \bibinfo{person}{Vladimir Bobrokov}.}
  \bibinfo{year}{2017}\natexlab{}.
\newblock \showarticletitle{Cross-domain recommendation for large-scale data}.
  In \bibinfo{booktitle}{\emph{CEUR Workshop Proceedings}},
  Vol.~\bibinfo{volume}{1887}. \bibinfo{pages}{9--15}.
\newblock


\bibitem[\protect\citeauthoryear{Sahebi and Walker}{Sahebi and Walker}{2014}]%
        {sahebi2014content}
\bibfield{author}{\bibinfo{person}{Shaghayegh Sahebi} {and}
  \bibinfo{person}{Trevor Walker}.} \bibinfo{year}{2014}\natexlab{}.
\newblock \showarticletitle{Content-Based Cross-Domain Recommendations Using
  Segmented Models.}
\newblock


\bibitem[\protect\citeauthoryear{Sarwar, Karypis, Konstan, Riedl,
  et~al\mbox{.}}{Sarwar et~al\mbox{.}}{2001}]%
        {sarwar2001item}
\bibfield{author}{\bibinfo{person}{Badrul~Munir Sarwar},
  \bibinfo{person}{George Karypis}, \bibinfo{person}{Joseph~A Konstan},
  \bibinfo{person}{John Riedl}, {et~al\mbox{.}}}
  \bibinfo{year}{2001}\natexlab{}.
\newblock \showarticletitle{Item-based collaborative filtering recommendation
  algorithms.}
\newblock \bibinfo{journal}{\emph{Www}}  \bibinfo{volume}{1}
  (\bibinfo{year}{2001}), \bibinfo{pages}{285--295}.
\newblock


\bibitem[\protect\citeauthoryear{Schein, Popescul, Ungar, and Pennock}{Schein
  et~al\mbox{.}}{2002}]%
        {schein2002methods}
\bibfield{author}{\bibinfo{person}{Andrew~I Schein},
  \bibinfo{person}{Alexandrin Popescul}, \bibinfo{person}{Lyle~H Ungar}, {and}
  \bibinfo{person}{David~M Pennock}.} \bibinfo{year}{2002}\natexlab{}.
\newblock \showarticletitle{Methods and metrics for cold-start
  recommendations}. In \bibinfo{booktitle}{\emph{Proceedings of the 25th annual
  international ACM SIGIR conference on Research and development in information
  retrieval}}. ACM, \bibinfo{pages}{253--260}.
\newblock


\bibitem[\protect\citeauthoryear{Sedhain, Menon, Sanner, and Xie}{Sedhain
  et~al\mbox{.}}{2015}]%
        {sedhain2015autorec}
\bibfield{author}{\bibinfo{person}{Suvash Sedhain},
  \bibinfo{person}{Aditya~Krishna Menon}, \bibinfo{person}{Scott Sanner}, {and}
  \bibinfo{person}{Lexing Xie}.} \bibinfo{year}{2015}\natexlab{}.
\newblock \showarticletitle{Autorec: Autoencoders meet collaborative
  filtering}. In \bibinfo{booktitle}{\emph{Proceedings of the 24th
  International Conference on World Wide Web}}. ACM.
\newblock


\bibitem[\protect\citeauthoryear{Singh and Gordon}{Singh and Gordon}{2008}]%
        {singh2008relational}
\bibfield{author}{\bibinfo{person}{Ajit~P Singh} {and}
  \bibinfo{person}{Geoffrey~J Gordon}.} \bibinfo{year}{2008}\natexlab{}.
\newblock \showarticletitle{Relational learning via collective matrix
  factorization}. In \bibinfo{booktitle}{\emph{Proceedings of the 14th ACM
  SIGKDD international conference on Knowledge discovery and data mining}}.
  ACM, \bibinfo{pages}{650--658}.
\newblock


\bibitem[\protect\citeauthoryear{Sutskever, Vinyals, and Le}{Sutskever
  et~al\mbox{.}}{2014}]%
        {sutskever2014sequence}
\bibfield{author}{\bibinfo{person}{Ilya Sutskever}, \bibinfo{person}{Oriol
  Vinyals}, {and} \bibinfo{person}{Quoc~V Le}.}
  \bibinfo{year}{2014}\natexlab{}.
\newblock \showarticletitle{Sequence to sequence learning with neural
  networks}. In \bibinfo{booktitle}{\emph{Advances in neural information
  processing systems}}. \bibinfo{pages}{3104--3112}.
\newblock


\bibitem[\protect\citeauthoryear{Tolstikhin, Bousquet, Gelly, and
  Schoelkopf}{Tolstikhin et~al\mbox{.}}{2017}]%
        {tolstikhin2017wasserstein}
\bibfield{author}{\bibinfo{person}{Ilya Tolstikhin}, \bibinfo{person}{Olivier
  Bousquet}, \bibinfo{person}{Sylvain Gelly}, {and} \bibinfo{person}{Bernhard
  Schoelkopf}.} \bibinfo{year}{2017}\natexlab{}.
\newblock \showarticletitle{Wasserstein auto-encoders}.
\newblock \bibinfo{journal}{\emph{arXiv preprint arXiv:1711.01558}}
  (\bibinfo{year}{2017}).
\newblock


\bibitem[\protect\citeauthoryear{Wang, Huang, Nie, and Ding}{Wang
  et~al\mbox{.}}{2011}]%
        {wang2011cross}
\bibfield{author}{\bibinfo{person}{Hua Wang}, \bibinfo{person}{Heng Huang},
  \bibinfo{person}{Feiping Nie}, {and} \bibinfo{person}{Chris Ding}.}
  \bibinfo{year}{2011}\natexlab{}.
\newblock \showarticletitle{Cross-language web page classification via dual
  knowledge transfer using nonnegative matrix tri-factorization}. In
  \bibinfo{booktitle}{\emph{Proceedings of the 34th international ACM SIGIR
  conference on Research and development in Information Retrieval}}. ACM,
  \bibinfo{pages}{933--942}.
\newblock


\bibitem[\protect\citeauthoryear{Wang, Wang, and Yeung}{Wang
  et~al\mbox{.}}{2015}]%
        {wang2015collaborative}
\bibfield{author}{\bibinfo{person}{Hao Wang}, \bibinfo{person}{Naiyan Wang},
  {and} \bibinfo{person}{Dit-Yan Yeung}.} \bibinfo{year}{2015}\natexlab{}.
\newblock \showarticletitle{Collaborative deep learning for recommender
  systems}. In \bibinfo{booktitle}{\emph{Proceedings of the 21th ACM SIGKDD}}.
  ACM, \bibinfo{pages}{1235--1244}.
\newblock


\bibitem[\protect\citeauthoryear{Wang, Xia, Zhao, Bian, Qin, Liu, and Liu}{Wang
  et~al\mbox{.}}{2018}]%
        {wang2018dual}
\bibfield{author}{\bibinfo{person}{Yijun Wang}, \bibinfo{person}{Yingce Xia},
  \bibinfo{person}{Li Zhao}, \bibinfo{person}{Jiang Bian}, \bibinfo{person}{Tao
  Qin}, \bibinfo{person}{Guiquan Liu}, {and} \bibinfo{person}{Tie-Yan Liu}.}
  \bibinfo{year}{2018}\natexlab{}.
\newblock \showarticletitle{Dual transfer learning for neural machine
  translation with marginal distribution regularization}. In
  \bibinfo{booktitle}{\emph{Thirty-Second AAAI Conference on Artificial
  Intelligence}}.
\newblock


\bibitem[\protect\citeauthoryear{Wu, Zhang, Yue, Zhang, He, and Sun}{Wu
  et~al\mbox{.}}{2018}]%
        {wu2018dual}
\bibfield{author}{\bibinfo{person}{Hao Wu}, \bibinfo{person}{Zhengxin Zhang},
  \bibinfo{person}{Kun Yue}, \bibinfo{person}{Binbin Zhang},
  \bibinfo{person}{Jun He}, {and} \bibinfo{person}{Liangchen Sun}.}
  \bibinfo{year}{2018}\natexlab{}.
\newblock \showarticletitle{Dual-regularized matrix factorization with deep
  neural networks for recommender systems}.
\newblock \bibinfo{journal}{\emph{Knowledge-Based Systems}}
  \bibinfo{volume}{145} (\bibinfo{year}{2018}), \bibinfo{pages}{46--58}.
\newblock


\bibitem[\protect\citeauthoryear{Wu, DuBois, Zheng, and Ester}{Wu
  et~al\mbox{.}}{2016}]%
        {wu2016collaborative}
\bibfield{author}{\bibinfo{person}{Yao Wu}, \bibinfo{person}{Christopher
  DuBois}, \bibinfo{person}{Alice~X Zheng}, {and} \bibinfo{person}{Martin
  Ester}.} \bibinfo{year}{2016}\natexlab{}.
\newblock \showarticletitle{Collaborative denoising auto-encoders for top-n
  recommender systems}. In \bibinfo{booktitle}{\emph{Proceedings of the Ninth
  ACM WSDM}}.
\newblock


\bibitem[\protect\citeauthoryear{Xia, Qin, Chen, Bian, Yu, and Liu}{Xia
  et~al\mbox{.}}{2017}]%
        {xia2017dual}
\bibfield{author}{\bibinfo{person}{Yingce Xia}, \bibinfo{person}{Tao Qin},
  \bibinfo{person}{Wei Chen}, \bibinfo{person}{Jiang Bian},
  \bibinfo{person}{Nenghai Yu}, {and} \bibinfo{person}{Tie-Yan Liu}.}
  \bibinfo{year}{2017}\natexlab{}.
\newblock \showarticletitle{Dual supervised learning}. In
  \bibinfo{booktitle}{\emph{Proceedings of the 34th International Conference on
  Machine Learning-Volume 70}}.
\newblock


\bibitem[\protect\citeauthoryear{Zhang, Yao, Sun, and Tay}{Zhang
  et~al\mbox{.}}{2019}]%
        {zhang2019deep}
\bibfield{author}{\bibinfo{person}{Shuai Zhang}, \bibinfo{person}{Lina Yao},
  \bibinfo{person}{Aixin Sun}, {and} \bibinfo{person}{Yi Tay}.}
  \bibinfo{year}{2019}\natexlab{}.
\newblock \showarticletitle{Deep learning based recommender system: A survey
  and new perspectives}.
\newblock \bibinfo{journal}{\emph{ACM Computing Surveys (CSUR)}}
  \bibinfo{volume}{52}, \bibinfo{number}{1} (\bibinfo{year}{2019}),
  \bibinfo{pages}{5}.
\newblock


\bibitem[\protect\citeauthoryear{Zhong, Fan, Peng, Zhang, Ren, Turaga, and
  Verscheure}{Zhong et~al\mbox{.}}{2009}]%
        {zhong2009cross}
\bibfield{author}{\bibinfo{person}{Erheng Zhong}, \bibinfo{person}{Wei Fan},
  \bibinfo{person}{Jing Peng}, \bibinfo{person}{Kun Zhang},
  \bibinfo{person}{Jiangtao Ren}, \bibinfo{person}{Deepak Turaga}, {and}
  \bibinfo{person}{Olivier Verscheure}.} \bibinfo{year}{2009}\natexlab{}.
\newblock \showarticletitle{Cross domain distribution adaptation via kernel
  mapping}. In \bibinfo{booktitle}{\emph{Proceedings of the 15th ACM SIGKDD
  international conference on Knowledge discovery and data mining}}. ACM.
\newblock


\bibitem[\protect\citeauthoryear{Zhuang, Luo, Shen, He, Xiong, Shi, and
  Xiong}{Zhuang et~al\mbox{.}}{2010}]%
        {zhuang2010collaborative}
\bibfield{author}{\bibinfo{person}{Fuzhen Zhuang}, \bibinfo{person}{Ping Luo},
  \bibinfo{person}{Zhiyong Shen}, \bibinfo{person}{Qing He},
  \bibinfo{person}{Yuhong Xiong}, \bibinfo{person}{Zhongzhi Shi}, {and}
  \bibinfo{person}{Hui Xiong}.} \bibinfo{year}{2010}\natexlab{}.
\newblock \showarticletitle{Collaborative dual-plsa: mining distinction and
  commonality across multiple domains for text classification}. In
  \bibinfo{booktitle}{\emph{Proceedings of the 19th ACM international
  conference on Information and knowledge management}}. ACM,
  \bibinfo{pages}{359--368}.
\newblock


\end{thebibliography}

\end{document}